\newtheorem{theorem}{Theorem}
\newtheorem{lemma}[theorem]{Lemma}
\theoremstyle{definition}
\begin{document}

	\title{Optimizing the Trade-off Between Throughput and PAoI Outage Exponents}

\author{\IEEEauthorblockN{Tai-Chun Yeh and Yu-Pin Hsu\\} 
\IEEEauthorblockA{Department of Communication Engineering \\National Taipei University}\\
	\IEEEauthorblockA{a88700337@gmail.com, yupinhsu@mail.ntpu.edu.tw}
	\thanks{The work was supported by National Science and Technology Council, Taiwan, under the grant number 110-2221-E-305-008-MY3.}
	%\thanks{The work was supported by Ministry of Science and Technology, Taiwan, under Grant MOST 107-2221-E-305-007-MY3.} 
%	
%The authors are with Department of Communication Engineering, National Taipei University, Taiwan, with email \texttt{a88700337@gmail.com  yupinhsu@mail.ntpu.edu.tw}. 

}

\maketitle
\thispagestyle{empty}
\pagestyle{empty}

\begin{abstract}
This paper investigates the trade-off between throughput and peak age of information (PAoI) outage probability in a multi-sensor information collection system. Each sensor monitors a physical process, periodically samples its status, and transmits the updates to a central access point over a shared radio resource. The trade-off arises from the interplay between  each sensor's sampling frequency and the allocation of the shared resource. To optimize this trade-off, we formulate a joint optimization problem for each sensor's sampling delay and resource allocation, aiming to minimize a weighted sum of sampling delay costs (representing a weighted sum of throughput) while satisfying PAoI outage probability exponent constraints. We derive an optimal solution and particularly propose a closed-form approximation for large-scale systems. This approximation provides an explicit expression for an approximately optimal trade-off,  laying a foundation for designing resource-constrained systems in applications that demand frequent updates and also stringent statistical timeliness guarantees.

\end{abstract}

\begin{IEEEkeywords}
Age of information, outage probability, sampling, resource allocation. 
\end{IEEEkeywords}

\section{Introduction}

In recent years, the rapid advancement of smart systems has significantly increased the demand for timely information to ensure their efficient and effective operation. For instance, in a smart transportation system \cite{qin2021distributed}, vehicles are equipped with sensors (e.g., GPS or radar) that continuously monitor their surroundings. These sensors generate updates about various physical processes and transmit the updates to a central controller. By aggregating these updates, the central controller gains a real-time understanding of the dynamic environment, enabling smart decision-making. To ensure safety and reliability, the information maintained by the central controller must remain as timely as possible.

To quantify the timeliness of information, \cite{kaul2012real} introduced the concept of the age of information (AoI), which measures the time elapsed since the generation of the most recent update. A lower AoI indicates that the received update is closer to the current state. Building on this, \cite{yates2018age} proposed the peak age of information (PAoI), which represents the maximum AoI observed just before a new update is received. PAoI is particularly valuable for systems where worst-case timeliness or adherence to specific age thresholds is critical.

Extensive research has been conducted on analyzing and minimizing average AoI/PAoI in diverse system settings. For instance, \cite{costa2016age} derived analytical expressions for the average AoI and PAoI in single-source systems, while \cite{yates2018age} extended this analysis to multi-source systems. Moreover, numerous system design strategies have been proposed to minimize these metrics, e.g., scheduling \cite{kadota2018scheduling}, resource allocation \cite{park2020centralized}, and sampling  \cite{ornee2019sampling}. Comprehensive surveys of these efforts can be found in \cite{yates2021age, kahraman2023age}. However, while minimizing the average AoI/PAoI typically enhances timeliness, it does not ensure a desired AoI with high probability. For systems with strict timeliness requirements, statistical AoI/PAoI guarantees are crucial. This has motivated research into the probabilistic characterization of AoI/PAoI, e.g., \cite{seo2019outage, hu2021status, zhong2023stochastic}.

In addition to timeliness, the total number of status updates (usually referred to as throughput) is a critical factor influencing the performance of smart systems. For example, in a smart transportation system, frequent sampling of a vehicle’s location allows the central controller to predict its future position with greater accuracy. However, overly frequent sampling can lead to network congestion, causing higher queueing delay and ultimately stale updates. To address this, numerous studies have examined the trade-off between throughput and \textit{average} AoI/PAoI, e.g., \cite{mankar2021throughput, bhat2020throughput}. Despite these efforts, the  relationship between throughput and \textit{statistical} AoI/PAoI  remains unclear. This gap is particularly relevant for systems that demand frequent updates while adhering to stringent statistical AoI/PAoI guarantees, such as those employed in safety-critical applications.

To address this gap, this paper investigates the trade-off between throughput and PAoI outage probability in an information status collection system. The system consists of multiple local monitors that observe physical processes and periodically generate status updates, which are transmitted to a central controller over a shared communication resource. Since PAoI is influenced by both sampling delay and transmission delay (the latter determined by resource allocation), optimizing only one of these factors is insufficient to achieve the optimal trade-off. 

The primary contribution of this paper is the joint optimal design of  sampling delay and resource allocation. The objective is to minimize a total sampling delay cost while satisfying PAoI outage probability exponent constraints, providing a framework to analyze the trade-off between  throughput and  PAoI outage. Our main result reveals that an approximately optimal  resource allocation consists of the minimum required allocation to satisfy the PAoI outage constraints, supplemented by a proportional distribution of the remaining resource. Moreover, through the approximation, we derive a closed-form expression that captures the relationship between each sensor's sampling delay (indicative of each sensor's throughput) and the PAoI outage probability exponents under the approximately optimal resource allocation. These findings are critical for designing systems that require frequent and timely information updates.

\section{System overview}\label{section:system}

This section introduces a network model in Section~\ref{subsection:model} and formulates a problem addressing the trade-off between sampling delay and PAoI outage  in Section~\ref{subsection:problem}.

\subsection{Network model}\label{subsection:model}
\begin{figure}
	\centering
	\includegraphics[width=.4\textwidth]{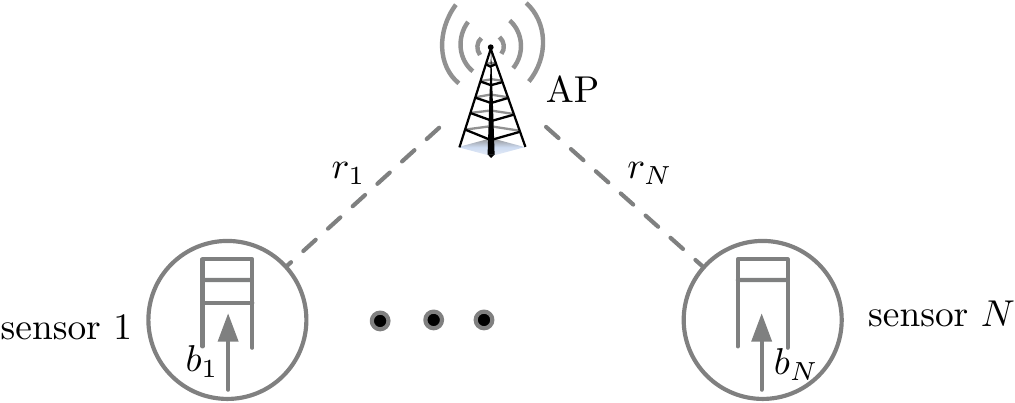}
	\caption{An information collection system.}
	\label{fig:model}
\end{figure}

As illustrated in Fig.~\ref{fig:model}, we consider an information status collection system comprising $N$ sensors and an access point (AP). Each sensor monitors a physical process and  updates the AP with the latest status of the process.  Specifically, each sensor samples the physical process periodically. Let $b_i$ represent the sampling delay for sensor $i$, and define $\mathbf{b} = (b_1, \cdots, b_N)$. Additionally, let $S_{i,j}$ denote the time when sensor $i$ generates  sample $j$.

The generated samples are transmitted to the AP as packets using a shared radio resource. Let $r_i$ represent the fraction of the total resource allocated to sensor $i$, satisfying $\sum_{i=1}^N r_i \leq 1$. Define  $\mathbf{r} = (r_1, \cdots, r_N)$. Given a resource allocation $\mathbf{r}$, the transmission time of a packet is assumed to be independent of other packets and follows the same statistical distribution for packets from the same sensor. Let $T_i$ denote the transmission time of a packet from sensor $i$, where the distribution of $T_i$ depends on the allocated resource $r_i$ only. Each sensor maintains a queue to store packets awaiting transmission, and the packets in the queue are transmitted in a first-come first-served manner.

To evaluate the timeliness of information at the AP, we use the peak age of information (PAoI) metric proposed in \cite{yates2018age}. PAoI quantifies the maximum staleness of information just before an update is received at the AP. Formally, let $D_{i,j}$ represent the time when  packet $j$ from sensor $i$ is delivered to the AP.  PAoI for sensor $i$, denoted by $A_{i,j}$, upon receiving packet~$j$ is given by $A_{i,j} = D_{i,j} - S_{i,j-1}$. Unlike most existing studies that focused on average AoI/PAoI, this work examines  outage probability of PAoI, which is crucial for applications requiring strict statistical guarantees. Specifically, for sensor~$i$, the AP enforces a constraint $\theta_i > 0$ such that
\[
\lim_{x \to \infty} \frac{\ln\mathbb{P}[A_{i,\infty} \geq x]}{x} \leq -\theta_i,
\]
ensuring that the outage probability $\mathbb{P}[A_{i,\infty} \geq x]$ in the steady state (as $j \to \infty$)  decays exponentially with a rate of at least $\theta_i$, as $x$ increases.  Here, $\theta_i$ is referred to as the outage exponent. The higher the value of $\theta_i$, the more stringent the probabilistic PAoI requirement. Let $\boldsymbol{\theta} = (\theta_1, \cdots, \theta_N)$.

%This analysis is particularly relevant for large-scale systems with massive numbers of sensors, where even extreme values of $x$ are significant. Furthermore, numerical studies \cite{.} indicate that the approximation $\mathbb{P}[A_{i,j} \geq x] \leq \text{constant} \cdot e^{-\theta_i x}$ holds well even for moderate $x$.

\subsection{Problem formulation}\label{subsection:problem}

To provide more timely information to the AP, a shorter sampling delay is desired, which, however, increases queueing delay and consequently raises PAoI. To study this trade-off, let $C_i$ denote the cost of delaying sampling for sensor $i$ by one unit of time, and define $\mathbf{C} = (C_1, \cdots, C_N)$. The goal is to minimize the total sampling delay cost $\sum_{i=1}^N C_i b_i$  (reflecting a high weighted sum of throughput since a lower $b_i$ corresponds to a higher sampling frequency) while ensuring the outage exponents  for all sensors are satisfied. 

Since PAoI depends on both  sampling delay $\mathbf{b}$ and  resource allocation $\mathbf{r}$, the problem requires jointly designing $\mathbf{b}$ and $\mathbf{r}$. Precisely, given outage exponent  $\boldsymbol{\theta}$ and sampling delay cost $\mathbf{C}$, we formulate the following optimization problem:
\begin{subequations}\label{eq:opt-original}
\begin{align}
\min_{\mathbf{b}, \mathbf{r}}  &\,\,\, \sum_{i=1}^N C_i b_i \label{eq:opt-original:objective} \\
\textrm{s.t.}  &\,\,\, \lim_{x \to \infty} \frac{\ln\mathbb{P}[A_{i,\infty} \geq x]}{x} \leq -\theta_i, \text{ for all $i = 1, \cdots, N$;} \label{eq:opt-original:constraint1} \\
&\,\,\, \sum_{i=1}^N r_i \leq 1. \label{eq:opt-original:constraint2}
\end{align}
\end{subequations}
Note that this formulation captures the trade-off between the total sampling delay cost and the statistical PAoI guarantees in a heterogeneous network, where different sensors may have distinct sampling delay costs or transmission times. 
%By solving this problem, we can jointly determine an optimal sampling delay $\mathbf{b}$ and resource allocation $\mathbf{r}$ to achieve both high sampling frequency (i.e., high throughput) and also stringent outage requirements in such a heterogeneous setting.

\section{Optimal design for sampling delay}
In this section, we characterize the optimality structure of the sampling delay design. To this end, let $\Lambda_i(\theta) = \ln \mathbb{E}[e^{\theta T_i}]$ represent the log moment generating function (LMGF) of transmission time $T_i$. From \cite[Lemma 1]{seo2019outage}, if the condition $\frac{\Lambda_i(\theta_i)}{\theta_i} < b_i$ (with strict inequality) is satisfied, then the asymptotic behavior of PAoI satisfies $\lim_{x \to \infty} \frac{\ln\mathbb{P}[A_{i,\infty} \geq x]}{x} \leq -\theta_i$, ensuring that the outage exponent $\theta_i$ is achievable. Building on this result, we further establish a stronger characterization in the following lemma. Specifically, we show that the condition $\frac{\Lambda_i(\theta_i)}{\theta_i} \leq b_i$ is not only sufficient but also necessary for satisfying the outage constraint in Eq.~(\ref{eq:opt-original:constraint1}). 

\begin{lemma}\label{lemma:necessary-condition}
The condition $\lim_{x \to \infty} \frac{\ln\mathbb{P}[A_{i,\infty} \geq x]}{x} \leq -\theta_i$ holds if and only if $\frac{\Lambda_i(\theta_i)}{\theta_i} \leq b_i$.
\end{lemma}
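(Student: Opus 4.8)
The plan is to prove the two implications separately, leveraging the cited result from \cite[Lemma 1]{seo2019outage} for the ``if'' direction and establishing the ``only if'' direction by a converse argument on the PAoI decomposition. For the sufficiency direction, the excerpt already gives us that $\frac{\Lambda_i(\theta_i)}{\theta_i} < b_i$ (strict) implies the outage exponent $\theta_i$. To upgrade this to the non-strict inequality $\frac{\Lambda_i(\theta_i)}{\theta_i} \le b_i$, I would use a continuity/limiting argument: if $\frac{\Lambda_i(\theta_i)}{\theta_i} = b_i$, then for any $\theta' < \theta_i$ we have (by convexity of $\Lambda_i$ and $\Lambda_i(0)=0$, so $\Lambda_i(\theta)/\theta$ is nondecreasing) $\frac{\Lambda_i(\theta')}{\theta'} \le \frac{\Lambda_i(\theta_i)}{\theta_i} = b_i$, and in fact strict if $\Lambda_i$ is strictly convex; applying the cited lemma at $\theta'$ gives decay exponent $\theta'$ for all $\theta' < \theta_i$, and letting $\theta' \to \theta_i$ yields $\limsup_{x\to\infty} \frac{\ln \mathbb{P}[A_{i,\infty} \ge x]}{x} \le -\theta_i$. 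A small subtlety here is handling the boundary case where $\Lambda_i/\theta$ is flat near $\theta_i$; I would address this by noting the limit in $x$ is monotone in $\theta'$ or by a direct $\epsilon$-argument.

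For the necessity (``only if'') direction, the key idea is to bound the PAoI from below by a quantity whose exponential decay rate is governed exactly by $\Lambda_i(\theta_i)/\theta_i$ versus $b_i$. Recall $A_{i,j} = D_{i,j} - S_{i,j-1} = (D_{i,j} - S_{i,j}) + (S_{i,j} - S_{i,j-1}) = (\text{system time of packet } j) + b_i$. The system time of packet $j$ in a first-come-first-served queue with deterministic interarrival $b_i$ and i.i.d.\ service times $T_i$ is, by Lindley's recursion, at least the waiting time, which in steady state stochastically dominates $\sup_{k \ge 0}\sum_{m=1}^{k}(T_i^{(m)} - b_i)$ (a random walk with increments of mean $\mathbb{E}[T_i] - b_i$). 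The tail of this supremum is classically characterized (Cram\'er--Lundberg / large-deviations for the all-time maximum of a random walk): if $b_i < \Lambda_i(\theta_i)/\theta_i$, then the random walk increment has a positive "tilted" drift preventing exponential decay at rate $\theta_i$ — more precisely, the decay exponent of the stationary waiting time equals the unique positive root $\theta^\star$ of $\Lambda_i(\theta) = \theta b_i$, and $b_i < \Lambda_i(\theta_i)/\theta_i$ forces $\theta^\star < \theta_i$, so the PAoI tail decays strictly slower than $e^{-\theta_i x}$, violating the constraint. Hence the constraint forces $b_i \ge \Lambda_i(\theta_i)/\theta_i$.

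Concretely, the steps I would carry out are: (1) write $A_{i,\infty} \ge W_{i,\infty} + b_i$ where $W_{i,\infty}$ is the stationary waiting time, so $\mathbb{P}[A_{i,\infty} \ge x] \ge \mathbb{P}[W_{i,\infty} \ge x - b_i]$; (2) invoke the standard result (e.g., Glynn--Whitt, or the $G/G/1$ large-deviations literature) that $\lim_{x\to\infty}\frac{1}{x}\ln\mathbb{P}[W_{i,\infty} \ge x] = -\theta^\star$ where $\theta^\star>0$ solves $\Lambda_i(\theta^\star) = \theta^\star b_i$, provided such a root exists (and if no positive root exists, i.e.\ $\mathbb{E}[T_i]\ge b_i$, the queue is unstable and the tail does not decay at all, so the constraint fails outright); (3) observe that $\Lambda_i(\theta)/\theta$ nondecreasing implies $\theta^\star \ge \theta_i \iff \Lambda_i(\theta_i)/\theta_i \le b_i$, i.e.\ the outage constraint $\theta^\star \ge \theta_i$ is equivalent to $\Lambda_i(\theta_i)/\theta_i \le b_i$; (4) combine with the sufficiency direction to conclude.

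The main obstacle I anticipate is rigorously pinning down the \emph{exact} exponential decay rate of the stationary waiting time (not just an upper or lower bound) and handling the degenerate cases cleanly: namely (a) when the queue is unstable ($\mathbb{E}[T_i] \ge b_i$), where the PAoI has no finite steady-state and the limit in Eq.~(\ref{eq:opt-original:constraint1}) is $0 > -\theta_i$, immediately violating the constraint — so this regime is consistent with needing $b_i \ge \Lambda_i(\theta_i)/\theta_i > \mathbb{E}[T_i]$; and (b) boundary behavior of $\Lambda_i$ near the edge of its domain, where $e^{\theta T_i}$ may fail to be integrable, in which case $\Lambda_i(\theta_i)=\infty$ and the condition $\Lambda_i(\theta_i)/\theta_i \le b_i$ is simply never satisfied, matching the fact that no finite sampling delay can then achieve exponent $\theta_i$. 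I would state the needed tail-asymptotics result as a cited lemma rather than reprove it, and devote the bulk of the argument to the reduction $A_{i,\infty} \ge W_{i,\infty} + b_i$ and the monotonicity equivalence in step (3), which are the genuinely paper-specific pieces.
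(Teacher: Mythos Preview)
Your approach is correct and takes a genuinely different route from the paper. The paper works directly with the steady-state representation $A_{i,\infty}=\max_{s\ge 1}\{\sum_{k=1}^s T_{i,k}-sb+2b\}$ and derives \emph{matching} upper and lower bounds on $\mathbb{P}[A_{i,\infty}\ge x]$ (Chernoff plus union bound for the upper; Cram\'er--Chernoff on a single term for the lower), obtaining the exact exponent $-\inf_{t>0}\tfrac{1}{t}I_i(t+b)$ in terms of the rate function $I_i$. It then converts the inequality $\inf_{t>0}\tfrac{1}{t}I_i(t+b)\ge\theta_i$ into $\Lambda_i(\theta_i)\le\theta_i b$ by a chain of equivalences culminating in the Legendre duality $\sup_x\{\theta x - I_i(x)\}=\Lambda_i(\theta)$. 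By contrast, you split the two directions: sufficiency via the cited strict-inequality lemma plus a limiting argument in $\theta'\uparrow\theta_i$, and necessity via the reduction $A_{i,\infty}\ge W_{i,\infty}+b_i$ together with the classical Cram\'er--Lundberg/Glynn--Whitt identification of the waiting-time exponent as the root $\theta^\star$ of $\Lambda_i(\theta)=\theta b_i$, followed by the monotonicity of $\Lambda_i(\theta)/\theta$. Your route is more modular and offloads the heavy lifting to well-known queueing asymptotics; the paper's route is more self-contained and yields the exponent in rate-function form as an intermediate result, with the Legendre step handling both directions at once. One small point: since the lemma is phrased with $\lim$ (not $\limsup$), you should make explicit that your two bounds together pin down the limit as exactly $-\theta^\star$ (your cited sufficiency at $\theta'\uparrow\theta^\star$ gives $\limsup\le-\theta^\star$, and the waiting-time lower bound gives $\liminf\ge-\theta^\star$), after which the equivalence $\theta^\star\ge\theta_i\iff\Lambda_i(\theta_i)/\theta_i\le b_i$ finishes cleanly.
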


\begin{proof}[Proof (sketch)]
 We employ large deviation theory \cite{srikant2014communication} to show that the AoI outage probability exponent can be expressed by $\lim_{x \to \infty} \frac{\ln\mathbb{P}[A_{i,\infty} \geq x]}{x}=-\inf_{t>0} \frac{1}{t} I_i(t + b)$, where $I_i(x) = \sup_{\gamma \in \mathbf{R}} \{\gamma x - \Lambda_i(\gamma)\}$ is the rate function corresponding to  transmission time $T_i$. Then, leveraging the relation between the rate function $I_i(\cdot)$ and the LMGF $\Lambda_i(\cdot)$, we can establish the equivalence. The detailed proof is given in Appendix~\ref{appendix:lemma:necessary-condition}.
\end{proof}

Lemma~\ref{lemma:necessary-condition} is analogous to the effective bandwidth theory (e.g., see \cite{de1995effective}) in queueing systems; however, their results cannot be directly applied here, as the evolution of PAoI differs from that of queues. Moreover, from Lemma~\ref{lemma:necessary-condition}, the constraint in Eq.~(\ref{eq:opt-original:constraint1}) can be equivalently replaced by $\frac{\Lambda_i(\theta_i)}{\theta_i} \leq b_i$. Using this substitution, the optimization problem in Eq.~(\ref{eq:opt-original}) can be reformulated as the following equivalent problem:
\begin{subequations}\label{eq:opt-transform}
\begin{align}
\min_{\mathbf{b}, \mathbf{r}} \quad &  \sum_{i=1}^{N} C_i b_i \label{eq:opt-transform:objective}\\
\textrm{s.t.} \quad &\frac{\Lambda_i(\theta_i)}{\theta_i} \leq b_i,  \text{ for all } i = 1, \cdots, N; \label{eq:opt-transform:constraint1} \\
& \sum_{i=1}^N r_i \leq 1. \label{eq:opt-transform:constraint2}
\end{align}
\end{subequations}

Recall that the LMGF $\Lambda_i(\theta_i)$ of  transmission time $T_i$ depends solely on  resource allocation and is independent of  sampling delay. Therefore, from Eq.~(\ref{eq:opt-transform:constraint1}), for a given resource allocation $\mathbf{r}$, the  sampling delay for sensor $i$ that minimizes the total sampling delay cost in Eq.~(\ref{eq:opt-transform:objective}) can be determined as  $b_i = \frac{\Lambda_i(\theta_i)}{\theta_i}$.
In the next section, we will derive an optimal resource allocation, denoted by $\mathbf{r}^*$. Once $\mathbf{r}^*$ is determined, the optimal sampling delay for sensor $i$, denoted by $b_i^*$, can be computed as
\begin{align}
b^*_i = \frac{\Lambda_i^*(\theta_i)}{\theta_i}, \label{eq:optimal-b}
\end{align}
where $\Lambda_i^*(\cdot)$ represents the LMGF of $T_i$ under the optimal resource allocation $r^*_i$.

\section{Optimal design for resource allocation under a specific transmission distribution}

In this section, we determine the optimal resource allocation for a specific transmission time distribution. We model  transmission time $T_i$ as an exponential random variable with mean $\mu_i r_i$, where $\mu_i$ represents the mean transmission rate per unit of resource. This model implies that the transmission capacity of a sensor scales proportionally with the amount of resource allocated to it. 

For exponentially distributed transmission times, the LMGF $\Lambda_i(\theta_i)$ can be computed as
$\Lambda_i(\theta_i) = \ln\frac{\mu_i r_i}{\mu_i r_i - \theta_i}$.
Given the relationship between $r^*_i$ and $b^*_i$ derived in Eq.~(\ref{eq:optimal-b}), we substitute $b_i$ in Eq.~(\ref{eq:opt-transform:objective}) with $b_i = \frac{1}{\theta_i}\ln\frac{\mu_i r_i}{\mu_i r_i - \theta_i}$.
This substitution transforms the optimization problem in Eq.~(\ref{eq:opt-transform}) into the following equivalent form for determining the optimal resource allocation $\mathbf{r}$:
\begin{subequations}\label{eq:opt-final}
\begin{align}
\min_{\mathbf{r}} \quad & \sum_{i=1}^{N} \frac{C_i}{\theta_i} \cdot \ln\frac{\mu_i r_i}{\mu_i r_i - \theta_i} \label{eq:opt-final:objective} \\
\textrm{s.t.} \quad & \sum_{i=1}^N r_i \leq 1. \label{eq:opt-final:constraint}
\end{align}
\end{subequations}

\subsection{Feasibility region of outage exponent $\boldsymbol{\theta}$}
We note that the objective function in Eq.~(\ref{eq:opt-final:objective}) is feasible only when $\mu_i r_i > \theta_i$. This implies that if sufficient resource $r_i$ cannot be allocated to sensor $i$ to ensure its transmission rate $\mu_i r_i$ exceeds its outage exponent $\theta_i$, the outage exponent guarantee cannot be achieved (as stated in Lemma~\ref{lemma:necessary-condition}). Moreover, if no resource allocation $\mathbf{r}$ satisfies $\mu_i r_i > \theta_i$ for all $i$ and $\sum_{i=1}^N r_i \leq 1$, the optimization problem in Eq.~(\ref{eq:opt-final})  have no solution.

Thus, we define outage exponent  $\boldsymbol{\theta}$ as \textit{feasible} if and only if there exists a resource allocation $\mathbf{r}$ such that $\mu_i r_i > \theta_i$ for all $i$ and also $\sum_{i=1}^N r_i \leq 1$. The following lemma provides a necessary and sufficient condition for the feasibility of $\boldsymbol{\theta}$.

\begin{lemma}
A set $\boldsymbol{\theta}$ of outage exponents is feasible if and only if $\sum_{i=1}^N \frac{\theta_i}{\mu_i} < 1$.
\end{lemma}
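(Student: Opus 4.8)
The plan is to prove the two directions of the equivalence separately, treating the claim as a statement about the existence of a point in the intersection of an open box and the standard simplex.

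First I would handle the ``only if'' direction by contraposition (or directly). Suppose $\boldsymbol{\theta}$ is feasible, so there exists $\mathbf{r}$ with $\mu_i r_i > \theta_i$ for all $i$ and $\sum_{i=1}^N r_i \le 1$. From $\mu_i r_i > \theta_i$ we get $r_i > \theta_i/\mu_i$ for each $i$ (note $\mu_i > 0$), and summing these strict inequalities over $i$ gives $\sum_{i=1}^N \theta_i/\mu_i < \sum_{i=1}^N r_i \le 1$, which is exactly the stated condition. This direction is essentially immediate.

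For the ``if'' direction, assume $\sum_{i=1}^N \theta_i/\mu_i < 1$ and I would construct an explicit feasible $\mathbf{r}$. Let $\delta = 1 - \sum_{i=1}^N \theta_i/\mu_i > 0$ and set, for instance, $r_i = \theta_i/\mu_i + \delta/(2N)$ (or more generally $r_i = \theta_i/\mu_i + \alpha_i \delta$ for any strictly positive weights $\alpha_i$ summing to at most $1$). Then $\mu_i r_i = \theta_i + \mu_i \delta/(2N) > \theta_i$ since $\theta_i > 0$ is assumed and $\mu_i, \delta > 0$, so the box constraints hold strictly; and $\sum_{i=1}^N r_i = \sum_{i=1}^N \theta_i/\mu_i + \delta/2 = 1 - \delta/2 < 1$, so the simplex constraint holds. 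Hence $\boldsymbol{\theta}$ is feasible.

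Both directions are short, so there is no serious obstacle; the only point requiring a little care is making sure the constructed $\mathbf{r}$ simultaneously satisfies the strict box inequalities and the (non-strict) resource constraint, which is why I leave a positive slack $\delta/2$ rather than spending the entire budget. I would also note, if desired, that the argument shows the feasibility region for $\boldsymbol{\theta}$ is precisely the open scaled simplex $\{\boldsymbol{\theta} : \theta_i > 0,\ \sum_i \theta_i/\mu_i < 1\}$, which connects cleanly to the requirement $\mu_i r_i > \theta_i$ appearing in the objective of Eq.~(\ref{eq:opt-final}).
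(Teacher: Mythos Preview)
Your proof is correct and essentially identical to the paper's: both directions use the same inequality manipulation, and for the ``if'' direction both construct a feasible allocation by adding a uniform positive slack to $\theta_i/\mu_i$ (the paper takes $\epsilon \le \delta/N$, you take $\delta/(2N)$). The only cosmetic difference is that you argue the ``only if'' direction directly while the paper phrases it as a contraposition, but the underlying computation is the same.
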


\begin{proof}
First, assume that $\sum_{i=1}^N \frac{\theta_i}{\mu_i} < 1$. Then, there exists an $\epsilon > 0$ such that $\epsilon \leq \frac{1 - \sum_{i=1}^N \frac{\theta_i}{\mu_i}}{N}$. Consider a resource allocation for sensor $i$ as $r_i = \frac{\theta_i}{\mu_i} + \epsilon$. With this allocation, $\mu_i r_i > \theta_i$ for all $i$. Additionally, we have
\begin{align*}
\sum_{i=1}^N r_i = \sum_{i=1}^N \left(\frac{\theta_i}{\mu_i} + \epsilon\right) &= \sum_{i=1}^N \frac{\theta_i}{\mu_i} + N\epsilon \\
&\leq \sum_{i=1}^N \frac{\theta_i}{\mu_i} + \left(1 - \sum_{i=1}^N \frac{\theta_i}{\mu_i}\right) = 1.	
\end{align*}
Thus, the outage exponent  $\boldsymbol{\theta}$ is feasible.

Next, assume that $\sum_{i=1}^N \frac{\theta_i}{\mu_i} \geq 1$, and suppose $\mu_i r_i > \theta_i$ for all $i$. Then, we have
\begin{align*}
\sum_{i=1}^N r_i > \sum_{i=1}^N \frac{\theta_i}{\mu_i} \geq 1,	
\end{align*}
which contradicts the constraint $\sum_{i=1}^N r_i \leq 1$. Therefore, by contraposition, $\sum_{i=1}^N \frac{\theta_i}{\mu_i} \geq 1$ implies that the outage exponent $\boldsymbol{\theta}$ is not feasible.
\end{proof}

As a result, the set $\{\boldsymbol{\theta}: \sum_{i=1}^N \frac{\theta_i}{\mu_i} < 1\}$ defines the region of all feasible outage exponents. This feasibility region is analogous to the capacity region \cite{srikant2014communication} in queueing systems. In this paper, we focus on scenarios where the requested outage exponent  $\boldsymbol{\theta}$ lies within the feasibility region.

\subsection{Optimal solution for resource allocation}

The optimal resource allocation  can be determined by solving the optimization problem in Eq.~(\ref{eq:opt-final}). Since the problem in Eq.~(\ref{eq:opt-final}) is a convex optimization problem and satisfies the Slater condition for a feasible $\boldsymbol{\theta}$, it can be  solved using the Karush-Kuhn-Tucker (KKT) conditions \cite{srikant2014communication}. 

The Lagrangian for this problem is given by
\[
\mathcal{L}(\mathbf{r}, \lambda) = \sum_{i=1}^{N}\left( \frac{C_i}{\theta_i} \cdot \ln \frac{\mu_i r_i}{\mu_i r_i - \theta_i} \right)+ \lambda\left(\sum_{i=1}^{N} r_i - 1\right),
\]
where $\lambda > 0$ is a Lagrange multiplier. Differentiating the Lagrangian with respect to $r_i$, we obtain
\begin{align*}
\frac{\partial \mathcal{L}(\mathbf{r}, \lambda)}{\partial r_i} &= \frac{C_i}{\theta_i} \cdot \left( \frac{\mu_i r_i - \theta_i}{\mu_i r_i} \cdot \frac{\mu_i \left(\mu_i r_i - \theta_i\right) - \mu_i^2 r_i}{\left(\mu_i r_i - \theta_i\right)^2} \right) + \lambda \\
&=  \frac{-C_i}{r_i \left(\mu_i r_i - \theta_i\right)} + \lambda.
\end{align*}
Setting the derivative to zero, we can obtain $\mu_i \lambda r_i^2 - \theta_i \lambda r_i - C_i = 0$. Solving this quadratic equation, the optimal resource allocation for sensor $i$ is
\begin{align}
r_i^* = \frac{\theta_i \lambda + \sqrt{\left(\theta_i \lambda\right)^2 + 4 C_i \mu_i \lambda}}{2 \mu_i\lambda } 
= \frac{\theta_i}{2\mu_i} + \frac{\theta_i}{2\mu_i} \left(1 + \frac{4 C_i \mu_i}{\theta_i^2 \lambda}\right)^{\frac{1}{2}}. \label{eq:optimal_r}
\end{align}
The Lagrange multiplier $\lambda$ can be determined by enforcing the constraint $\sum_{i=1}^N r_i^* = 1$, leading to
\begin{align}
\sum_{i=1}^N \frac{\theta_i}{2\mu_i} + \frac{\theta_i}{2\mu_i} \left(1 + \frac{4 C_i \mu_i}{\theta_i^2 \lambda}\right)^{\frac{1}{2}} = 1. \label{eq:solve-lambda}
\end{align}

Due to the square root terms in Eq.~(\ref{eq:solve-lambda}), there is no closed-form solution for $\lambda$. However, this equation can be solved using numerical methods, such as Newton's method \cite{hildebrand1987introduction}.

\section{Approximate Solution}

To gain insights into the optimal resource allocation in Eq.~(\ref{eq:optimal_r}) and also characterize the optimal trade-off between throughput and PAoI outage exponents, this section presents an approximation. When the number $N$ of sensors  is large, $\lambda$ becomes large, as indicated by Eq.~(\ref{eq:solve-lambda}). Using the approximation $(1 + x)^{\frac{1}{2}} \approx 1 + \frac{x}{2}$ when $x$ is small, we can approximate Eq.~(\ref{eq:optimal_r}) as
\begin{align}
r^*_i \approx \frac{\theta_i}{2\mu_i} + \frac{\theta_i}{2\mu_i} \left(1 + \frac{1}{2} \cdot \frac{4 C_i \mu_i}{\theta_i^2 \lambda}\right) = \frac{\theta_i}{\mu_i} + \frac{C_i}{\theta_i} \frac{1}{\lambda}. \label{eq:approximate-r}
\end{align}

According to Eq.~(\ref{eq:approximate-r}), we can interpret $\theta_i$  as the required transmission rate for sensor $i$ to satisfy its outage exponent. Thus, the first term $\frac{\theta_i}{\mu_i}$ in Eq.~(\ref{eq:approximate-r}) corresponds to the minimum resource required for sensor $i$ to meet its outage exponent. The AP  reserves $\sum_{i=1}^N \frac{\theta_i}{\mu_i}$ resource to satisfy all outage exponents. The remaining resource, $1 - \sum_{i=1}^N \frac{\theta_i}{\mu_i}$, is distributed to minimize the total sampling delay cost.

The second term in Eq.~(\ref{eq:approximate-r}) shows that the remaining resource is allocated proportional to $\frac{C_i}{\theta_i}$. This allocation can be explained as follows: From Eq.~(\ref{eq:optimal-b}), the sampling delay $b_i^*$ is given by $b_i^* = \frac{1}{\theta_i} \ln\frac{\mu_i r_i^*}{\mu_i r_i^* - \theta_i}$.
Allocating more resource $r_i^*$ reduces $b_i^*$,  lowering the total sampling delay cost. Also, a higher $\theta_i$ increases $b_i^*$, as a stricter outage exponent requires less frequent sampling to avoid excessive queueing delay. Thus, the remaining resource allocation follows the principles as follows:
\begin{itemize}
    \item A sensor~$i$ with a high sampling delay cost $C_i$ is allocated more remaining resource to reduce its sampling delay and offset the high unit cost.
    \item A sensor~$i$ with a stringent outage exponent $\theta_i$ is allocated fewer remaining resource, resulting in a longer sampling delay to meet the stricter constraint.
\end{itemize}

Next, using the condition $\sum_{i=1}^N r_i^* = 1$, the Lagrange multiplier $\lambda$ in Eq.~(\ref{eq:approximate-r}) can be derived, and the approximately optimal resource allocation becomes
\begin{align}
r^*_i \approx \frac{\theta_i}{\mu_i} + \frac{C_i}{\theta_i} \frac{1 - \sum_{i=1}^N \frac{\theta_i}{\mu_i}}{\sum_{i=1}^N \frac{C_i}{\theta_i}}. \label{eq:approximate-r-final}	
\end{align}
Substituting this allocation into Eq.~(\ref{eq:optimal-b}), the approximately optimal sampling delay becomes
\begin{align}
b^*_i \approx \frac{1}{\theta_i} \ln \left(1 + \frac{\theta_i^2}{C_i \mu_i}  \frac{\sum_{i=1}^N \frac{C_i}{\theta_i}}{1 - \sum_{i=1}^N \frac{\theta_i}{\mu_i}}\right). \label{eq:approxiamte-b-final}	
\end{align}

Our approximation in Eqs.~(\ref{eq:approximate-r-final}) and (\ref{eq:approxiamte-b-final}) provides a computationally efficient approach to determine the resource allocation and the sampling delay, avoiding the need for complex numerical methods. Moreover, Eq.~(\ref{eq:approxiamte-b-final}) characterizes an approximately optimal trade-off between sampling frequency and outage exponents across various system parameters in a heterogeneous network. The numerical analysis of this trade-off will be discussed in the next section.

%\subsection{Proportional fairness}
%This section further investigates a structure of the (approximately) optimal  resource allocation in the previous section. Let $f(\mathbf{r})=\sum_{i=1}^{N} \frac{C_i}{\theta_i} \cdot\ln\frac{\mu_i r_i}{\mu_i r_i-\theta_i}$. Let $\nabla f(\mathbf{r})=(\frac{\partial f(\mathbf{r})}{r_1}, \cdots, \frac{\partial f(\mathbf{r})}{r_N})$. Because $f(\mathbf{r})$ is a convex function, we have $\nabla f(\mathbf{r}^*)(\mathbf{r}-\mathbf{r}^*)^T \geq 0$ \cite[Result 2.1.9]{srikant2014communication}, i.e., 
%\begin{align*}
%	\sum_{i=1}^N  \frac{-C_{i} (r_i-r_i^*)}{r^*_{i}\left(\mu_{i} r^*_{i}-\theta_{i}\right)} \geq 0. 
%\end{align*}
%By Eq.~(\ref{eq:approximate-r}), we can obtain that $\mu_{i} r^*_{i}-\theta_{i} \approx \frac{C_i \mu_i}{\lambda \theta_i}$. Substituting the approximation, we can obtain that 
%\begin{align*}
%	\sum_{i=1}^N  \frac{\theta_i}{\mu_i}\frac{(r_i-r_i^*)}{r^*_{i}} \leq 0. 
%\end{align*}
%That is, the (approximately) optimal resource allocation achieves a weighted proportional fairness \cite{.}. Again, we can interpretate the  weight $\frac{\theta_i}{\mu_i}$ by the requested bandwidth. If the proportion by one sensor's allocated bandwidth changes is positive, there will be at least one other sensor whose proportional change is negative.  

%\section{Overloaded systems}
\section{Numerical results}
\begin{figure}
\centering
	\includegraphics[width=.4\textwidth]{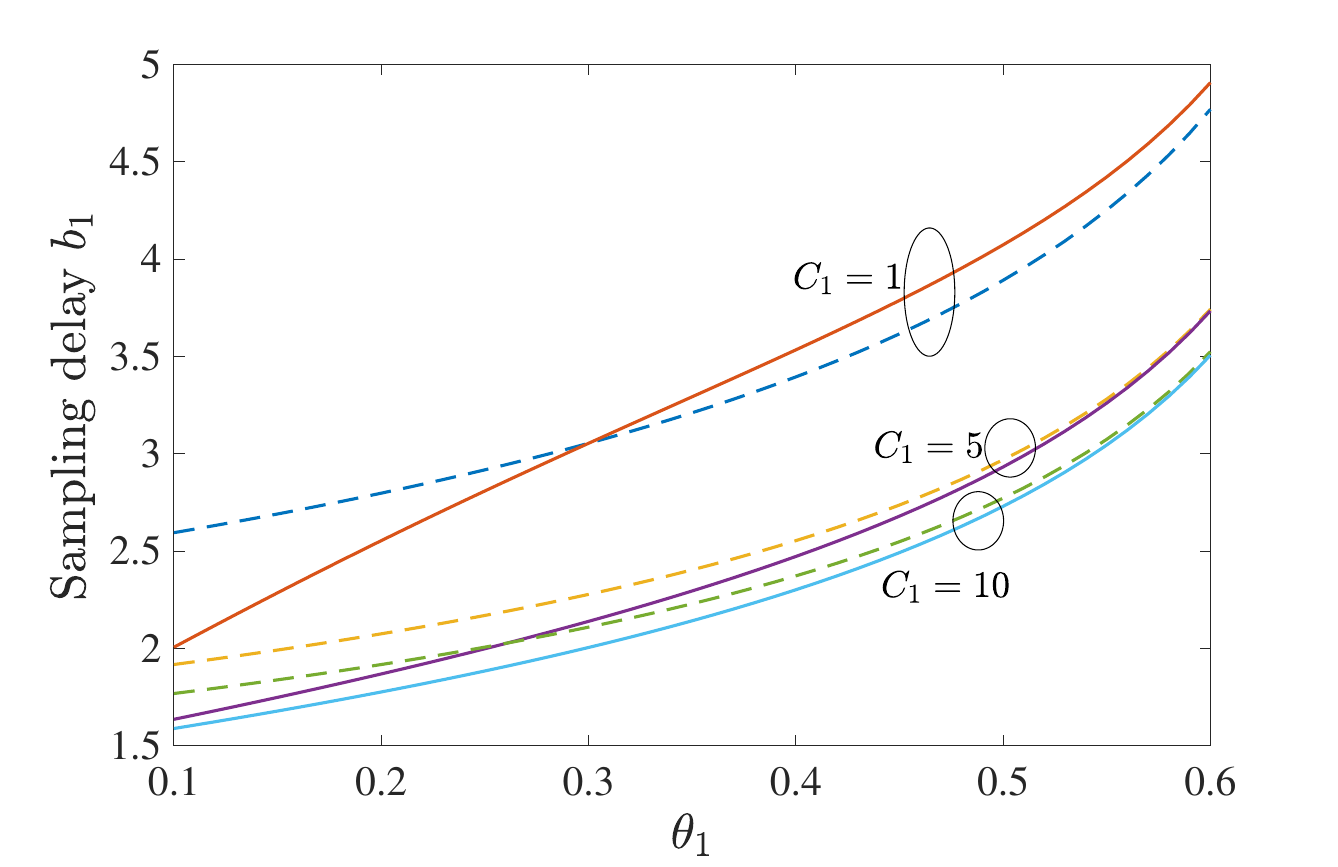}
	\caption{Sampling delay $b_1$ versus outage exponent $\theta_1$.}
	\label{fig:sim1}
\end{figure}
\begin{figure}
\centering
	\includegraphics[width=.4\textwidth]{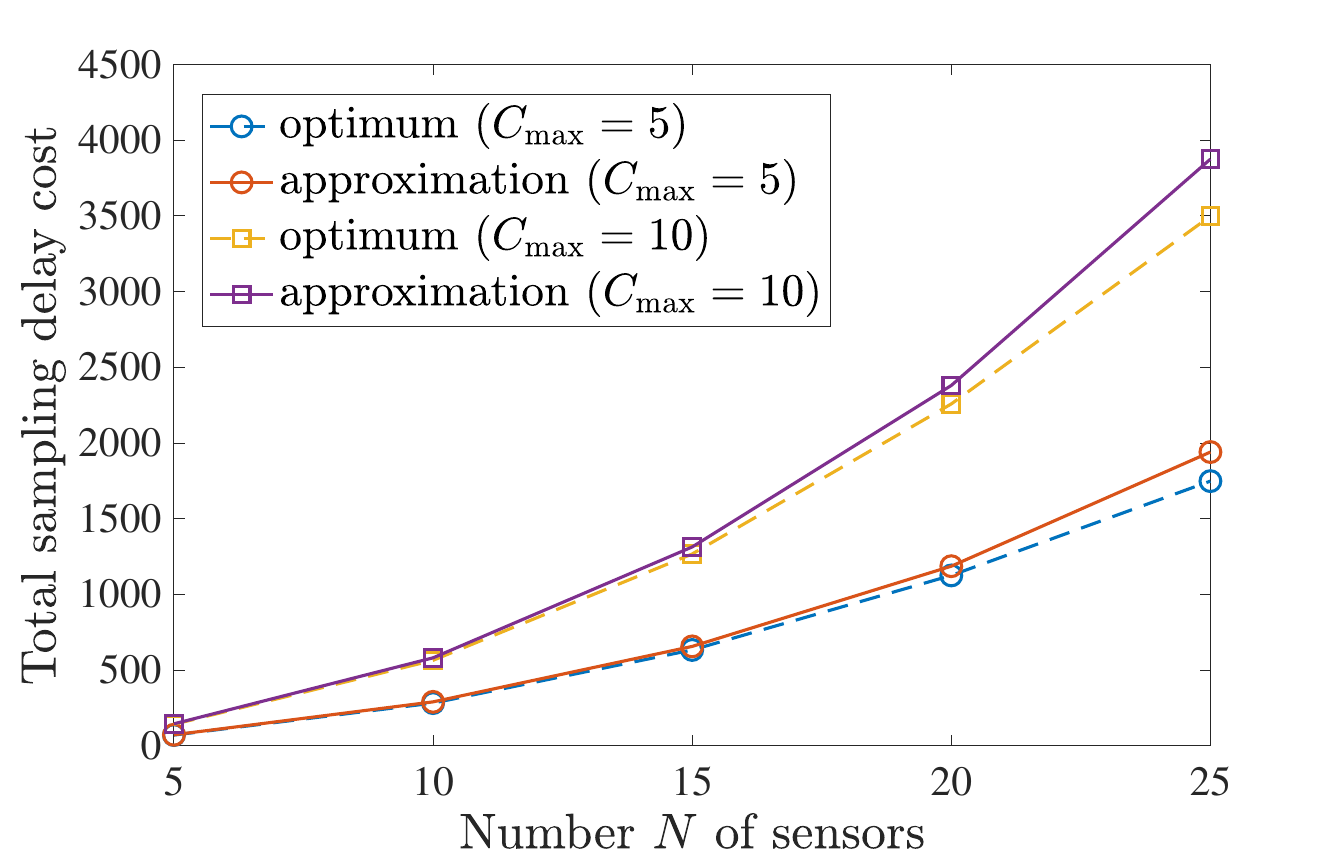}
	\caption{The total sampling delay cost versus the number $N$ of sensors.}
	\label{fig:sim2}
\end{figure}
In this section, we evaluate the proposed approximate solution through numerical studies. For simplicity, we set $\mu_i = 1$ for all $i$. 

First, we consider a system with two sensors, where $\theta_2 = 0.3$ and $C_2 = 1$ are fixed. Fig.~\ref{fig:sim1} illustrates the trade-off between  sampling delay $b_1$ and  outage exponent $\theta_1$ under the optimal resource allocation (dashed curves) and the approximate solution (solid curves). From Fig.~\ref{fig:sim1}, we observe that the approximate trade-off closely matches the optimal trade-off when $\theta_1$ approaches its maximum feasible value (i.e., when $\boldsymbol{\theta}$ is near the boundary of the feasibility region) or when the sampling delay cost $C_1$ becomes larger.

Second, we examine the accuracy of the approximation in terms of the total sampling delay cost (indicative of a weighted sum of throughput), as shown in Fig.~\ref{fig:sim2}. Here, the outage exponents are set as follows: $\theta_{\frac{N}{2}} = \frac{0.5}{N}$, $\theta_i = \frac{0.5}{N} - 0.01(\frac{N}{2} - i)$ for $i < \frac{N}{2}$, and $\theta_i = \frac{0.5}{N} + 0.01(i - \frac{N}{2})$ for $i > \frac{N}{2}$. This configuration ensures $\sum_{i=1}^N \theta_i = 0.5$, which is far from the boundary of the feasibility region. The sampling delay cost $C_i$ is randomly selected from the range $[1, C_{\max}]$ for all $i$, and each data point in the figure is averaged over 1000 random scenarios. From Fig.~\ref{fig:sim2}, we observe that the approximation performs closely to the optimal solution (in terms of total sampling delay cost), even for small values of $N$.

\section{Conclusion}
In this paper, we investigated an information collection system operating over a shared communication resource. We proposed a joint design of sampling  and resource allocation to approximately minimize the total sampling delay cost while satisfying PAoI outage exponent constraints. While this work focuses on a feasible set of outage exponents, an interesting direction for future research is the development of admission control mechanisms for systems where the outage exponents exceed the feasibility region.

 \appendices
 \section{Proof of Lemma~\ref{lemma:necessary-condition}}\label{appendix:lemma:necessary-condition}
 The proof considers a fixed sensor $i$. From \cite{seo2019outage}, PAoI can   be expressed by $A_{i, j}=\max_{1 \leq s \leq j}\{\sum_{k=s}^{j} T_{i, k}-(j-1-s) b\},$ which can be expanded as $\max\{T_{i, j}+b, T_{i, j-1}+T_{i, j}, T_{i, j-2}+T_{i, j-1}+T_{i, j}-b, \cdots\}$. 	
Because of i.i.d. transmission times $T_{i,j}$ for different $j$, we can re-express  PAoI as $A_{i, j}=\max_{1 \leq s \leq j}\left\{\sum_{k=1}^{s} T_{i, k}-s b+2 b\right\}$. In the steady state, as $j \to \infty$, this becomes $A_{i,\infty} = \max_{s \geq 1} \left\{ \sum_{k=1}^{s} T_{i,k} - s b + 2b \right\}$.

Let $I_i(x) = \sup_{\gamma \in \mathbf{R}} \{\gamma x - \Lambda_i(\gamma)\}$ be the rate function associated with  transmission time $T_i$. We first derive an upper bound on the outage probability:
\begin{align*}
\mathbb{P}[A_{i,\infty} \geq x] &= \mathbb{P}\left[\max_{s \geq 1} \left\{ \sum_{k=1}^{s} T_{i,k} - s b + 2b \right\} \geq x\right] \\
&\leq \sum_{s=1}^{\infty} \mathbb{P}\left[\sum_{k=1}^{s} T_{i,k} - s b + 2b \geq x\right] \\
&= \sum_{s=1}^{\infty} \mathbb{P}\left[\sum_{k=1}^{s} T_{i,k} \geq x + s b - 2b\right] \\
&\stackrel{(a)}{\leq} \sum_{s=1}^{\infty} \frac{\mathbb{E}\left[e^{\gamma \sum_{k=1}^{s} T_{i,k}}\right]}{e^{\gamma(x + s b - 2b)}} \text{ for any $\gamma > 0$}\\
&\stackrel{(b)}{=} \sum_{s=1}^{\infty} \frac{e^{s \Lambda_i(\gamma)}}{e^{\gamma(x + s b - 2b)}} \\
&= e^{2\gamma b} \sum_{s=1}^{\infty} e^{-s\left(\gamma \left(\frac{x}{s} + b \right) - \Lambda_i(\gamma)\right)}\\
&\mathop{=}^{(c)} e^{2\gamma b}\sum_{t \in \{\frac{1}{x}, \frac{2}{x}, \cdots\}} e^{-x t \left(\gamma \left(\frac{1}{t} + b \right) - \Lambda_i(\gamma)\right)},
\end{align*}
where step (a) applies the Chernoff bound, step (b) uses the i.i.d. assumption for $T_{i,k}$'s, and step (c) defines another variable $t = \frac{s}{x}$. Note that the above inequality holds for all $\gamma > 0$. We can choose $\gamma^* = \arg\sup_{\gamma > 0} \gamma \left(\frac{1}{s} + b\right) - \Lambda_i(\gamma)$. Note that $\gamma^*$ is finite because, according to \cite[Lemma 1]{seo2019outage}, if $\gamma^*$ were infinite, the outage probability of  PAoI would be zero, which is not feasible in practice. Substituting $\gamma^*$ into the inequality, we can obtain:
\begin{align*}
\mathbb{P}[A_{i,\infty} \geq x] &\leq e^{2\gamma^* b} \sum_{t \in \{\frac{1}{x}, \frac{2}{x}, \cdots\}} e^{-x t \left(\gamma^* \left(\frac{1}{t} + b\right) - \Lambda_i(\gamma^*)\right)} \\	
&\stackrel{(a)}{=} e^{2\gamma^* b} \sum_{t \in \{\frac{1}{x}, \frac{2}{x}, \cdots\}} e^{-x t I_i\left(\frac{1}{t} + b\right)} \\
&\stackrel{(b)}{\leq} e^{2\gamma^* b} \hat{t} x e^{-x \inf_{t>0} t I_i\left(\frac{1}{t} + b\right)} \text{ for some constant $\hat{t}$}, 
\end{align*}
where step (a) applies the definition of the rate function $I_i(x)$ and the equality $\sup_{\gamma \in \mathbf{R}} \gamma x - \Lambda_i(\gamma) = \sup_{\gamma > 0} \gamma x - \Lambda_i(\gamma)$ from \cite[Page 303]{srikant2014communication}, and step (b)  bounds the sum of terms beyond $\hat{t}x-1$ (which is a geometric series) by the maximum of the first $\hat{t}x - 1$ terms, following the bounding technique in \cite[Page 315]{srikant2014communication}.

Next, we derive a lower bound for the outage probability:
\begin{align*}
\mathbb{P}[A_{i,\infty} \geq x] &= \mathbb{P}\left[\max_{s \geq 1} \left\{\sum_{k=1}^{s} T_{i,k} - s b + 2b \right\} \geq x\right] \\
&\geq \mathbb{P}\left[\sum_{k=1}^{s} T_{i,k} - s b + 2b \geq x\right]  \text{ for any $s$}  \\
&\geq \mathbb{P}\left[\sum_{k=1}^{s}T_{i,k} - s b \geq x\right] \\
&\stackrel{(a)}{\geq} e^{-x \inf_{t>0} \frac{1}{t} I_i(t + b)},
\end{align*}
where step (a) is from the Cramer–Chernoff theorem as in \cite[Page 309]{srikant2014communication}.

Combining these bounds, we can express the exponent of the outage probability as
\[
\lim_{x \to \infty} \frac{\ln \mathbb{P}[A_{i,\infty} \geq x] }{x} = -\inf_{t>0} \frac{1}{t} I_i(t + b).
\]
%=- \inf_{u>0} u I_i\left(\frac{1}{u} + b\right)
Finally, we can establish the equivalence as follows:
\begin{align*}
&\lim_{x\rightarrow \infty}\frac{\ln\mathbb{P}[A_{i,\infty} \geq x]}{x} \leq -\theta_i\\
\iff &  \inf_{t > 0} \frac{1}{t} I_{i}(t+b) \geq \theta_{i} \\
\iff &  \frac{1}{t} I_{i}\left(t+b\right) \geq \theta_{i} \text { for all $t>0$}  \\
\mathop{\iff}^{(a)} &   t\theta_{i}-I_{i}(t+b) \leq 0  \text { for all $t>0$ and also $t\leq 0$}  \\
\iff &  \theta_{i}(t+b)-I_{i}(t+b) \leq \theta_{i} b  \text { for all $t\in \mathbf{R}$}\\
\iff &  \max _{t+b \in \mathbf{R}} \theta_{i}(t+b)-I_{i}(t+b) \leq \theta_{i}b \\
\mathop{\iff}^{(b)} & \Lambda_i(\theta_{i}) \leq \theta_{i} b,
\end{align*}
where step (a) is because $I(x) \geq 0$ for all $x \in \mathbf{R}$, step (b)  uses the property  between the rate function $I_i(x)$ and the LMGF $\Lambda_i(\gamma)$, as in \cite[Theorem 10.3.4]{srikant2014communication}.

\clearpage
\small
\bibliographystyle{IEEEtran}
\bibliography{IEEEabrv,ref}
\balance

\end{document}